\documentclass[letterpaper,10pt,conference]{ieeeconf}  

\IEEEoverridecommandlockouts                              
\usepackage{graphics} 
\usepackage{times} 
\usepackage{amsmath} 
\usepackage{amssymb}  
\usepackage{amsfonts}
\usepackage[acronym]{glossaries}
\usepackage{quantikz}
\usepackage[hidelinks]{hyperref}
\usepackage[raggedright]{subfigure}

\usepackage[capitalize,noabbrev]{cleveref}
\Crefname{equation}{}{}
\Crefname{figure}{Fig.}{Figs.}
\Crefname{figure}{Tab.}{Tabs.}
\Crefname{section}{Sec.}{Secs.}
\Crefname{appendix}{App.}{Apps.}

\newtheorem{theorem}{Theorem}
\newtheorem{theorem*}{Theorem}

\newif\ifshort

\begin{document}

\newacronym{qec}{QEC}{quantum error correction}
\newacronym{ftqc}{FTQC}{fault-tolerant quantum computing}
\newacronym{cqsk}{C-QSK}{Clifford quantum simulation kernel}
\newacronym{lcs}{LCS}{\emph{logical Clifford synthesis}}
\newacronym{sas}{SAS}{\emph{solve-and-stitch}}
\newacronym{pbs}{PBS}{\emph{piecewise-best selection}}

\title{\LARGE \bf
    Optimized Compilation of Logical Clifford Circuits 
}

\author{Alexander Popov$^{*,1,2}$, Nico Meyer$^{1,3}$, Daniel D. Scherer$^{1}$ and Guido Dietl$^{2}$
    \thanks{The research was supported by the Bavarian Ministry of Economic Affairs, Regional Development and Energy with funds from the Hightech Agenda Bayern via the project BayQS and in parts by the German Federal Ministry of Research, Technology and Space, funding program Quantum Systems, via the project Q-GeneSys, grant number 13N17389.}
    \thanks{$^1$Fraunhofer IIS, Fraunhofer Institute for Integrated Circuits IIS, N\"urnberg, Germany; $^2$University of W\"urzburg, Professorship of Satellite Communication and Radar Systems, W\"urzburg, Germany; $^3$Pattern Recognition Lab, Friedrich-Alexander-Universit\"at Erlangen-N\"urnberg, Erlangen, Germany. $^*$Email: \texttt{alexander.popov.res@proton.me}}
}

\maketitle

\begin{abstract}
Fault-tolerant quantum computing hinges on efficient logical compilation, in particular, translating high-level circuits into code-compatible implementations. Gate-by-gate compilation often yields deep circuits, requiring significant overhead to ensure fault-tolerance. As an alternative, we investigate the compilation of primitives from quantum simulation as single blocks. We focus our study on the $[[n,n\mathrm{-}2,2]]$ code family, which allows for the exhaustive comparison of potential compilation primitives on small circuit instances. Based upon that, we then introduce a methodology that lifts these primitives into size-invariant, depth-efficient compilation strategies. This recovers known methods for circuits with moderate Hadamard counts and yields improved realizations for sparse and dense placements. Simulations show significant error-rate reductions in the compiled circuits. We envision the approach as a core component of peephole-based compilers. Its flexibility and low hand-crafting burden potentially enable extension to other circuit structures and code families.
\end{abstract}

\glsresetall
\bstctlcite{IEEEexample:BSTcontrol}  

\section{\label{sec:intro}Introduction}

\Gls{qec} is essential for scaling quantum computers and remains a primary bottleneck~\cite{shor1995scheme,google2025quantum}. A key challenge is efficient logical compilation: mapping circuits to a chosen code with low undetected error while respecting architectural constraints and minimizing depth and space–time volume~\cite{gottesman2009introduction,litinski2019game}. For leading approaches such as the surface code, gate-by-gate compilation inflates resource requirements because logical gates require repeated stabilizer measurement rounds and code-deformation primitives (e.g., lattice surgery), introducing routing, ancillas, and synchronization barriers~\cite{horsman2012surface,gidney2019efficient}. Global optimization that exploits commutation and shared parity measurements is therefore crucial to remove unnecessary cycles and active patches, reduce logical depth and space–time volume, and align with surrounding subroutines such as distillation and syndrome extraction.

Building on the stabilizer framework and early fault-tolerant gate realizations~\cite{gottesmann1998fault,calderbank1996good}, and on binary symplectic representations of Clifford operations with encoded Paulis~\cite{dehaene2003clifford}, \emph{Rengaswamy et al.}~\cite{rengaswamy2018synthesis} introduced a symplectic-geometry synthesis that enumerates compatible logical realizations for arbitrary stabilizer codes. While this \gls{lcs} approach exposes the full design space, its search space grows combinatorially with circuit width and gate count, making it costly to isolate low-depth, low-volume realizations at scale. More recently, \emph{Chen and Rengaswamy}~\cite{chen2024tailoring,chen2025tailoring} proposed a complementary, algorithm-tailored \gls{sas} formulation to hand-craft depth-efficient logical implementations of \glspl{cqsk}~\cite{li2022paulihedral}\textemdash{}an important primitive in quantum simulation algorithms\textemdash{}on the \([[n,n\mathrm{-}2,2]]\) code family~\cite{gottesmann1998fault}. This reduces compilation complexity via scalable templates but requires substantial manual design and, as we show, can be suboptimal in some regimes.

As in Refs.~\cite{rengaswamy2018synthesis,chen2024tailoring,chen2025tailoring,chen2025fault}, we focus on \gls{cqsk}, which arise as well-defined subroutines inside larger quantum circuits and can be optimized independently of surrounding non-Clifford gates. Our constructions thus serve as logical peephole optimizations~\cite{liu2021relaxed,rietsch2024unitary}: we identify and replace Clifford sub-circuits with depth-optimized encoded realizations on the chosen code family. Concretely, we mine small instances with \gls{lcs}~\cite{rengaswamy2018synthesis} to discover low-depth compilation primitives, then lift these patterns into strategies that scale to arbitrary system sizes while respecting code constraints. This narrows the search to structurally meaningful candidates, reduces hand-crafting, and improves upon \gls{sas}~\cite{chen2024tailoring,chen2025tailoring} in edge regimes.

\medskip
\noindent\textbf{Main Contributions of this Work.} We claim and
summarize the main contributions of our paper
as follows:
\renewcommand{\labelenumi}{\Roman{enumi}.}
\begin{enumerate}
    \item We systematically mine the design space of small \gls{cqsk} instances using \gls{lcs}, uncovering depth-favorable compilation primitives and distilling them into closed-form sequencing rules for the \([[n,n\mathrm{-}2,2]]\) family.
    \item We generalize these components into scalable compilation strategies that synthesize circuits of arbitrary size with reduced manual effort, enabling potential extensions to other code families and circuit structures.
    \item We introduce proof techniques that verify correctness by checking the required logical Pauli constraints and stabilizer preservation.
    \item We empirically validate the derived strategies under noise simulation, demonstrating reduced depth and higher success rates; the largest gains over \gls{sas} appear in edge regimes of \gls{cqsk}.
\end{enumerate}
Overall, we provide a mathematical framework for the efficient development of depth-optimized compilation strategies, tailored for peephole optimization of Clifford sub-circuits.

\medskip
\noindent
The remainder of this paper is structured as follows: In \cref{sec:preliminaries}, we establish preliminaries on stabilizers, Pauli constraints, and the \gls{cqsk}. We furthermore position our work in the context of quantum compilation and summarize necessary insights from prior work, in particular regarding the \gls{lcs} and \gls{sas} approaches. In \cref{sec:compilation_primitives}, we use the aforementioned \gls{lcs} algorithm to identify promising compilation primitives on tractable system sizes. The identified structures are generalized in \cref{sec:compilation_strategy} to arbitrary system sizes, and we furthermore prove correctness. In particular, we re-discover the \gls{sas} approach under certain structural assumptions on the \gls{cqsk}, while we identify improved strategies for relaxed restrictions. In \cref{sec:evaluation}, we perform an empirical evaluation of the discovered compilation strategies and quantify improvements over \gls{sas}. Finally, in \cref{sec:discussion}, we position our results in the greater context of \gls{ftqc}, discuss limitations, and identify potential for future extensions.


\section{\label{sec:preliminaries}Preliminaries and Prior Work}

We focus on Clifford sub-circuits within larger algorithms. Clifford circuits map the Pauli group $\mathcal{P}_n$ onto itself. As every quantum state decomposes into Pauli operators, a Clifford circuit is fully characterized by its action on the Pauli basis. It suffices to track the generators $X_{1:k}$ and $Z_{1:k}$; the transformations of $Y_{1:k}$ follow from $Y_i=X_iZ_i$ for $i\in\{1, \dots, k\}$. We call these transformation rules the \emph{Pauli constraints}. A logical implementation on any $[[n,k,d]]$ stabilizer code, where $n$ is the number of physical qubits, $k$ the number of encoded logical qubits, and $d$ the code distance, must satisfy two key requirements:

\subsubsection{Logical Pauli Constraints}  
The circuit must mimic the mapping relations of the Clifford circuit for the \( k \) logical Pauli generators \( \overline{X}_i \) and \( \overline{Z}_i \). These logical constraints are given by:
\begin{align}
    \begin{aligned}
    \overline{X}_i &\mapsto C_i^X(\overline{X}_{1:k},\,\overline{Y}_{1:k},\,\overline{Z}_{1:k}),\\[2mm]
    \overline{Z}_i &\mapsto C_i^Z(\overline{X}_{1:k},\,\overline{Y}_{1:k},\,\overline{Z}_{1:k}).
    \end{aligned}
\end{align}
Here, $\overline{X}_{1:k}$ abbreviates $\overline{X}_1\ldots \overline{X}_k$, and $C_i^X$ and $C_i^Z$ describe how each Pauli operator transforms under the C-QSK circuit.
As the logical $\overline{X}_i$ and $\overline{Z}_i$ operations on the $[[n,n\mathrm{-}2,2]]$ code family are $X_1X_{i+1}$ and $Z_{i+1}Z_{n}$~\cite{gottesmann1998fault,chao2018fault}, respectively, we define the embedding $\mathcal{E}:[k]\to\{2,\ldots,n-1\}$ by $\mathcal{E}(i)=i+1$ to simplify notation. 

\subsubsection{Stabilizer Preservation}  
To enable reliable error correction and detection, the logical Clifford circuit must preserve the stabilizer group  $\mathcal{S}$  of the code, meaning it maps $\mathcal{S}$ back onto itself under conjugation. This preservation of the stabilizer group ensures that the syndrome measurements remain valid throughout the computation.

\smallskip
\noindent
Because globally optimal compilation is impractical, one typically uses local peephole-based optimizations~\cite{liu2021relaxed,rietsch2024unitary}. These local blocks appear in quantum simulation kernels—repeating, frequently executed subroutines of quantum simulation algorithms~\cite{li2022paulihedral}. We focus on \glsentrylongpl{cqsk} (C-QSKs), where $R_z(\theta)$ is restricted to $P = R_z(\frac{\pi}{2})$, as shown in \cref{fig:CQSK}. For a fixed system size, these circuits differ only in the number and locations of Hadamard gates; let $I_h \subseteq \left\{ 1,\ldots,k\right\}$ denote the set of qubit indices with Hadamards, where $h$ is its cardinality.

To date, logical compilation is largely gate-by-gate, with few exceptions for specialized circuit classes like IQP circuits~\cite{paletta2024robust,hangleiter2025fault}. While broader \gls{qec} tasks have seen rapid progress\textemdash{}e.g., ML-based encodings~\cite{meyer2025learning,meyer2025variational,meyer2026learning}, gadgets~\cite{meyer2026logical}, and decoders~\cite{bausch2024learning}\textemdash{}efficient, code-specific compilation remains underexplored. Among the few targeted methods are \gls{lcs} and \gls{sas}, which we adopt and summarize below.

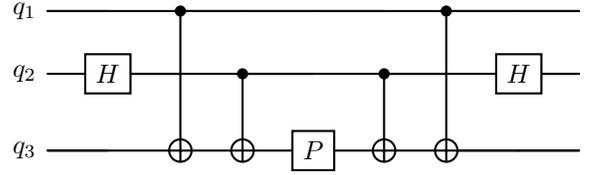
\begin{figure}[tbp]
   \centering
    \begin{minipage}{0.5\textwidth}
        \centering
        \scalebox{0.8}{
            \begin{quantikz}
                \lstick{$q_1$} & \qw  & \ctrl{2} & \qw & \qw & \qw & \ctrl{2} & \qw & \qw \\
                \lstick{$q_2$} & \gate{H} & \qw & \ctrl{1} & \qw & \ctrl{1} & \qw & \gate{H} & \qw \\
                \lstick{$q_3$} & \qw & \targ{} & \targ{} & \gate{P} & \targ{} & \targ{} & \qw & \qw
            \end{quantikz}
        }
    \end{minipage}%
    \caption{Clifford quantum simulation kernel (C-QSK) on three qubits with a Hadamard gate only on the second qubit, i.e.\ $I_h = \left\{ 2 \right\}$ and $\left| I_h \right| = 1$.}
    \label{fig:CQSK}
\end{figure}

\subsection{The Logical Clifford Synthesis (LCS) Algorithm}
The \gls{lcs} algorithm introduced in \cite{rengaswamy2018synthesis} generates all logical realizations of any Clifford circuit on any quantum code. 
To achieve that, first, the Pauli constraints of the underlying circuit are transformed into binary symplectic equations by means of the so-called $m$-qubit operator
\begin{align}
    D(a,b) = X^{a_1} Z^{b_1} \;\otimes\; \cdots \;\otimes\; X^{a_m} Z^{b_m},
\end{align}
with $a = [a_1,\ldots,a_m]$, $b = [b_1,\ldots,b_m] \in \mathbb{F}_2^{\,m}$, $m \in \mathbb{N}$. Every Pauli operator can be seen as such an $m$-qubit operator and therefore can be associated with the two binary vectors $a$ and $b$. Hence, the Pauli constraints can be interpreted as mappings from one set of binary vectors to another, which gives rise to the system of linear equations
\begin{align}
    [a_i', b_i'] = [a_i, b_i]\, F , \qquad i \in \{1,\ldots,k\},
\end{align}
where $k$ is the system size for the respective Clifford circuit.

\begin{figure*}[tbp]
  \centering
  \includegraphics[width=0.99\textwidth]{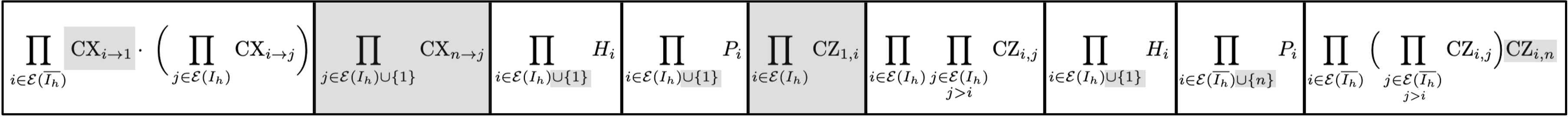}
  \caption{\label{fig:sas_rules}Full sequencing rule obtained by the $\mathcal{C}^{H}_{mid}$ compilation strategy, which is equivalent to the \gls{sas} from \cite{chen2024tailoring} (up to re-formulations with different notation). Operations highlighted in gray occur only for odd Hadamard configurations. The embedding function is given by $\mathcal{E}: [k] \rightarrow \{2,\ldots,n-1\}$ with $\mathcal{E}(i) = i+1$.}
\end{figure*}

The symplectic solution $F$ is then decomposed into a product of so-called elementary forms, which correspond to specific elementary quantum gates. This product therefore defines a Clifford quantum circuit for each symplectic solution obtained with the \gls{lcs} algorithm. It is important to mention that the decomposition into elementary forms is not unique and, therefore, variations in the realized circuits are possible. By design, the \gls{lcs} algorithm iterates all possible solutions of the system of equations. While this in principle allows to search for compilations that are optimal w.r.t.\ a given metric, the solution space grows super-exponentially in the stabilizer rank: for a general \([[n,k,d]]\) stabilizer code, a given logical Clifford circuit admits \(2^{r(r+1)/2}\) symplectic solutions with \(r = n-k\). To the best of our knowledge, no methods are known to filter for solutions with specific properties already at the solving stage. However, for the $[[n,n\mathrm{-}2,2]]$ code family considered in this work, there are only $2^3 = 8$ solutions, which allows for an exhaustive analysis of the solutions. 

\subsection{The Solve-and-Stitch (SAS) Algorithm}
The \gls{sas} algorithm described in \cite{chen2024tailoring,chen2025tailoring} is a specialized approach in order to obtain a logical realization of the \gls{cqsk} on the $[[ n, n\mathrm{-}2, 2]]$ code family. Intuitively, following the explicit form of the logical operators  $\overline{X}_{i} = X_{1} X_{i+1}$, $\overline{Z}_{i} = Z_{i+1} Z_{n}$, the Pauli constraints are written in a closed form giving rise to compilation strategies that generalize to arbitrary system sizes $k$.

The concrete procedure of the \gls{sas} algorithm is the following: given the logical-to-physical $X$- and $Z$-mappings, one first constructs, for each gate in the \gls{cqsk} separately, a circuit that realizes the corresponding mapping relation (the \emph{solve} step). These individual logical realizations are then \emph{stitched} together, typically by removing duplicate gates, so that the resulting circuit simultaneously fulfills all required logical-to-physical mapping relations. The combined circuit is subsequently checked for stabilizer preservation and, if necessary, adjusted. Applying this procedure yields a compilation strategy which we summarize using a slightly modified notation in \cref{fig:sas_rules}. This procedure, however, requires substantial manual effort for the construction of the individual mapping circuits, the stitching such that the combined circuit satisfies all constraints simultaneously, and the adjustments to ensure stabilizer preservation.


\section{\label{sec:compilation_primitives}Depth-Optimized Compilation Primitives for Small Systems}
Our objective is to design compilation routines for \gls{cqsk} circuits that produce physical realizations with shallow depth, and thus presumably reduced error susceptibility. Towards this goal, we first employ the above-discussed \gls{lcs} algorithm to systematically analyze the solution space of small circuit instances. As discussed above, for the $[[n,n\mathrm{-}2,2]]$ code family \gls{lcs} produces eight solutions, i.e.\ eight candidates for compilation primitives.

Apart from the number of qubits, the \glspl{cqsk} which we consider only differ by the number and position of the Hadamard gates. For this purpose, for small system sizes with $2 \le k \le 8$ logical qubits, we exhaustively enumerated all possible Hadamard placements. For medium-sized systems $10 \le k \le 20$, such an exhaustive sweep would have been computationally infeasible, since the number of placements grows combinatorially in $k$. In this case, we restrict the analysis to $10$ randomly sampled placements per Hadamard-count, which we found to still provide a representative picture of the circuit behavior.

\begin{figure}[tbp]
  \centering
  \includegraphics[width=0.8\linewidth]{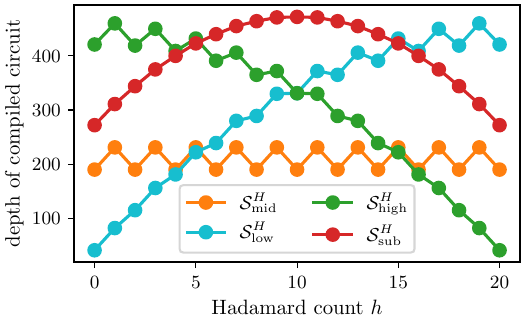}\caption{\label{fig:comparison_sol0_3}Comparison of depth for logical realizations of \glspl{cqsk} on varying system sizes as a function of the Hadamard count $h$. We plot the four position-invariant solutions of \gls{lcs} for \gls{cqsk} circuits on $k=20$ qubits. While solution $\mathcal{S}^{H}_{\text{mid}}$ produces particularly shallow circuits in the regime of about $5$ to $15$ Hadamard gates, $\mathcal{S}^{H}_{\text{low}}$ is clearly superior in case of sparse and $\mathcal{S}^{H}_{\text{high}}$ for dense Hadamard placement.}
\end{figure}

From this analysis, we found that the depth of the logical realization (mainly driven by the two-qubit count, as single-qubit gates are compiled in compact layers) strongly depends on which of the eight solutions from the \gls{lcs} algorithm is selected. Interestingly, for four of the solutions, the depth is invariant w.r.t.\ the actual Hadamard positioning and only depends on the count $h$. The other four solutions show a large spread of depth over different Hadamard placements. As the minimal depth among these realizations is in general significantly higher than for the well-behaved solutions, we focus only on these for our further analysis.

Extending our analysis to a system size of $k=20$ in \cref{fig:comparison_sol0_3}, we find that different solutions perform particularly well for different \gls{cqsk} structures. One solution produces minimal-depth circuits for moderate Hadamard counts of about $5$ to $15$, which we refer to as solution $\mathcal{S}^{H}_{\text{mid}}$. For circuits with $5$ or less Hadamard gates, depth is minimized by solution $\mathcal{S}^{H}_{\text{low}}$, for $15$ or more it is best to use $\mathcal{S}^{H}_{\text{high}}$. The remaining one of the four solutions $\mathcal{S}^{H}_{\text{sub}}$ is sub-optimal in all regions, so it can be eliminated from the set of solution candidates. Interestingly, as $k$ increases, the difference between the three remaining solutions becomes more pronounced in the respective Hadamard regions.

Above insights motivate us to use different \gls{lcs} solutions for different Hadamard counts, which we refer to in the following as the \glsentrylong{pbs} strategy. Evaluations show that  for $k > 10$, the crossover points are about
\begin{align}
&\text{Use } \mathcal{S}^{H}_{\text{low}} \text{ for } 0 < h < \tfrac{k}{4},\nonumber \\
&\text{use } \mathcal{S}^{H}_{\text{mid}} \text{ for } \tfrac{k}{4} \le h \le \tfrac{3k}{4}, \label{eq:selection_rules} \\
&\text{use } \mathcal{S}^{H}_{\text{high}} \text{ for } \tfrac{3k}{4} < h < k,\nonumber
\end{align}
with explicit analytic derivations planned for future work. Following the case distinction in \cref{eq:selection_rules}, the optimal compilation strategy can be selected purely based on the Hadamard count and size of the \gls{cqsk} of interest. However, currently, for each circuit we still have to use the \gls{lcs} algorithm to produce the desired logical realization. This is highly impractical, in particular for larger system sizes. We eliminate this inefficiency by generalizing the solution primitives to explicit closed-form compilation strategies in the next section.

\section{\label{sec:compilation_strategy}Generalized Compilation Strategy for Arbitrary System Sizes}

In this section, we generalize the three solution primitives $\mathcal{S}^{H}_{\text{low}}, \mathcal{S}^{H}_{\text{mid}}, \mathcal{S}^{H}_{\text{high}}$ identified in \cref{sec:compilation_primitives} to actual compilation strategies $\mathcal{C}^{H}_{\text{low}}, \mathcal{C}^{H}_{\text{mid}}, \mathcal{C}^{H}_{\text{high}}$. Therefore, we systematically examine the logical circuit structures for small system sizes $k$ and different Hadamard counts $h$ to extract repeating patterns. This enables us to describe four main building blocks, from which every circuit is composed, schematically shown for $k=4$ in \cref{fig:building_blocks}.
In the following, we derive closed-form descriptions for the three compilation strategies, depending on Hadamard density. To streamline presentation and make subsequent calculations cleaner and more concise, we first introduce a few shorthand notations:
Let $G$ denote an arbitrary single-qubit gate. For an index set
$I \subseteq [n]$, we define the corresponding single-qubit gate layer by
\begin{align}
G_{I} := \prod_{i \in I} G_i .
\end{align}
For two index sets $I_1, I_2 \subseteq [n]$, we define a CZ block acting on all
pairs $(i,j) \in I_1 \times I_2$ as
\begin{align}
\mathrm{CZ}_{(i,j) \in I_1 \times I_2}
:= \prod_{i \in I_1} \ \prod_{j \in I_2} \mathrm{CZ}_{i,j}.
\end{align}
Analogously, for a CNOT block with controls on $I_1$ and targets on $I_2$,
we define
\begin{align}
\mathrm{CX}_{i \in I_1 \to j \in I_2}
:= \prod_{i \in I_1} \ \prod_{\substack{j \in I_2 \\ j \neq i}}
\mathrm{CX}_{i \to j}.
\end{align}

\begin{figure}[tbp]
  \centering
  \includegraphics[width=0.7\linewidth]{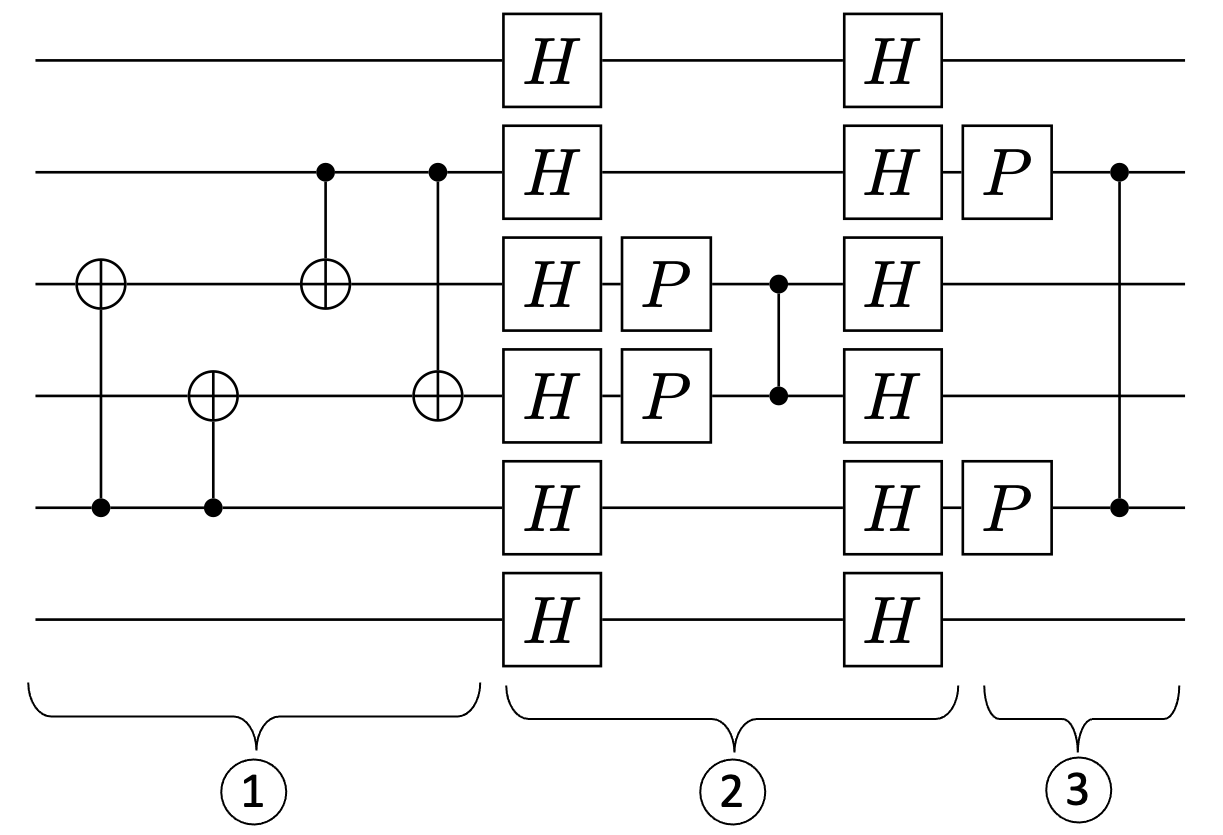}
  \caption{Building blocks for logical realizations of \glspl{cqsk} with even Hadamard count: (1) CNOT-entanglement layer, (2) IQP-like structure, (3) Z-diagonal layer. While shown for $k=4$, the same structure is observed for all system sizes. The concrete instantiation of the blocks depends on the Hadamard count, position, and the choice of strategy $\mathcal{S}$.}
  \label{fig:building_blocks}
\end{figure}

\begin{figure}
    \centering
    \subfigure[\label{fig:block_1}Two instances of the first block from \cref{fig:building_blocks}: the targets of the CNOT operations are wires in $\mathcal{E}(I_h)$, while the controls lie in the disjoint set $\mathcal{E}(\overline{I_h})$. The structure is subdivided into CNOT operations with (a) control wire indices larger than target wire indices and (b) vice versa. Overall, this observation is summarized in \cref{eq:sequencing_rule_1}. Note that $I_h, \overline{I_h} \subseteq \lbrack k \rbrack$.]{
        \includegraphics[width=0.99\linewidth]{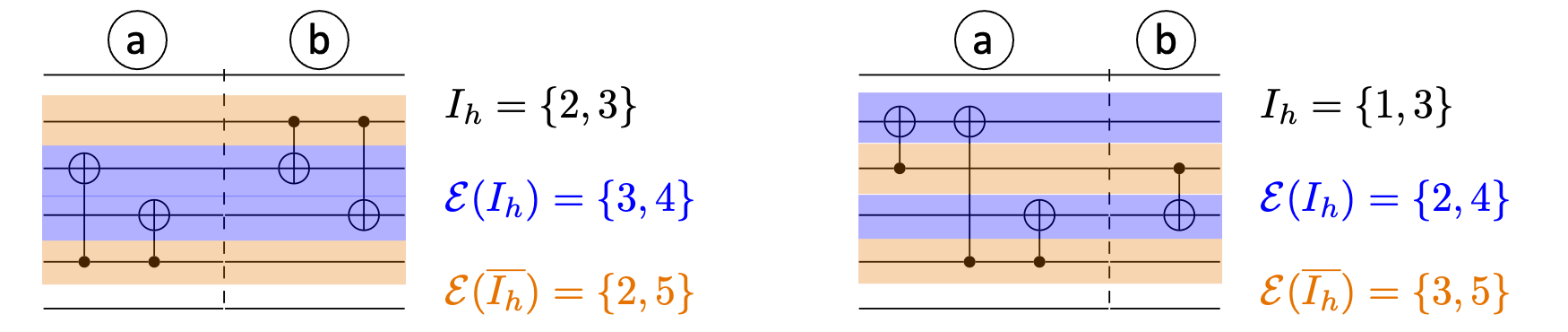}
    }\\
    \subfigure[\label{fig:block_2}Two instances of the second block from \cref{fig:building_blocks}: the first and last layer contain Hadamard gates on all wires. In between, $P$ gates ($CZ$ gates) act on all (pairs of) wires in the index set $\mathcal{E}(I_h)$. With some of the Hadamard gates canceling to identity, we arrive at \cref{eq:sequencing_rule_2}.]{
        \includegraphics[width=0.99\linewidth]{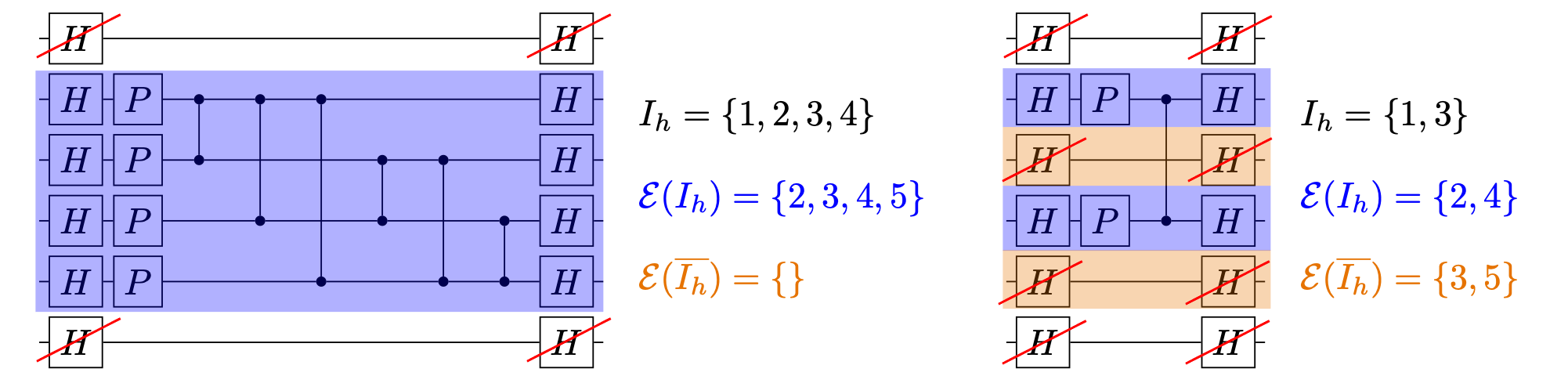}
    }\\
    \subfigure[\label{fig:block_3}Three instances of the third block from \cref{fig:building_blocks}: $P$ gates ($CZ$ gates) act on all (pairs of) wires in the index set $\mathcal{E}(\overline{I_h})$, yielding \cref{eq:sequencing_rule_3}.]{
        \includegraphics[width=0.99\linewidth]{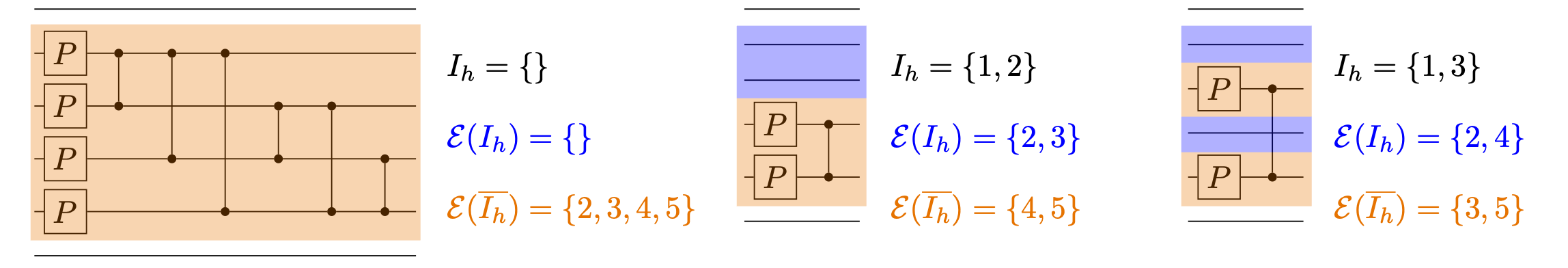}
    }
    \caption{Exemplary instances of the three blocks from \cref{fig:building_blocks}, corresponding to \gls{lcs} solution $\mathcal{S}^{H}_{\text{mid}}$. The underlying \gls{cqsk} circuits are of size $k=4$ and contain an even number of Hadamard gates. 
    Based on just these few instances, it is possible to compactly formulate the compilation strategy in \cref{eq:sequencing_rule_1,eq:sequencing_rule_2,eq:sequencing_rule_3}. The full compilation strategy $\mathcal{C}^{H}_{\text{mid}}$ just sequentially concatenates the three compiled blocks.}
    \label{fig:blocks}
\end{figure}

\subsection{\label{subsec:compilation_strategy_moderate}Compilation Strategy for Moderate Hadamard Counts}
To determine the optimal compilation strategy for moderate Hadamard counts, interpreted as the range defined in \cref{eq:selection_rules}, we analyze the four building blocks identified in \cref{fig:building_blocks} for instances of size $k=4$. However, we note that the same structures generalize to systems of arbitrary size, as we also prove at the end of this section.

In \cref{fig:block_1} we show representative logical realizations of the first block, with the sets $\mathcal{E}(I_h)$ and $\mathcal{E}(\overline{I_h})$ highlighted in blue and red, respectively. It is evident that the CNOT-targets lie on $\mathcal{E}(I_h)$, while the controls act on $\mathcal{E}(\overline{I_h})$. Moreover, the CNOT-structure can be divided into two parts:
First, all CNOTs are applied with control indices larger than their targets, 
while in the second part, the order is reversed, with controls smaller than the targets. 
This decomposition can be expressed as $(a)~\mathrm{CX}_{ \substack{i \in \mathcal{E}(\overline{I_h}) \to j \in \mathcal{E}(I_h) \\ j < i}}$ and $(b)~
\mathrm{CX}_{ \substack{i \in \mathcal{E}(\overline{I_h}) \to j \in \mathcal{E}(I_h) \\ j > i} }$,
which we summarize into
\begin{align}
    \label{eq:sequencing_rule_1}
    \mathrm{CX}_{ \mathcal{E}(\overline{I_h}) \to \mathcal{E}(I_h)}.
\end{align}

Similar regularities can be identified in the second block consisting of an IQP-like structure~\cite{shepherd2009temporally}, as shown in \cref{fig:block_2}. First, many of the Hadamard cancel since $H^2 = \mathbb{I}$. The remaining Hadamard gates always occupy the wires with indices in $\mathcal{E}(I_h)$. The same applies to the consecutive layer of $P$ gates. Finally, both controls and targets of the $CZ$ gates are restricted to unordered combinations from $\mathcal{E}(I_h)$. Altogether, the structure can be summarized by the following sequence:
\begin{align}
\label{eq:sequencing_rule_2}
H_{\mathcal{E}(I_h)} \cdot P_{\mathcal{E}(I_h)} \cdot \mathrm{CZ}_{\{(i,j) \in \mathcal{E}(I_h)^2 \mid\; j>i\}} \cdot H_{\mathcal{E}(I_h)}
\end{align}

We now turn our attention to the third and final structural block of the compiled circuit: a Z-diagonal layer consisting of $P$ gates followed by $CZ$ gates. In contrast to the second block, these gates operate solely on wires with indices in $\mathcal{E}(\overline{I_h})$.  
This behavior can be summarized by the following product rules:
\begin{align}
\label{eq:sequencing_rule_3}
P_{\mathcal{E}(\overline{I_h})} \cdot \mathrm{CZ}_{\{(i,j) \in \mathcal{E}(\overline{I_h})^2 \mid\; j>i\}},
\end{align}
where the first factor represents the application of $P$ gates across all wires in $\mathcal{E}(\overline{I_h})$, while the second one indicates a layer of $CZ$ gates acting pairwise within $\mathcal{E}(\overline{I_h})$.

\medskip
\noindent
By combining all three building blocks, we recover the same gate sequencing rule as obtained by the \gls{sas} method, which is depicted in \cref{fig:sas_rules}. The case for odd Hadamard gates works analogously to the above, for which the same equivalence observation could be made. However, compared to the approach in the original \gls{sas} paper~\cite{chen2024tailoring}, above derivation follows a much more straightforward and intuitive paradigm: instead of going through the intricate procedure of stitching together root circuits derived from Pauli constraints and subsequently applying logical identities to simplify the circuit and reduce its depth, we directly identify simple structural rules for the constituent blocks and compose them.  
\ifshort
\else
    As the methodology for deriving the two other compilation strategies $\mathcal{C}^{H}_{\text{low}}$ and $\mathcal{C}^{H}_{\text{high}}$ is the same as above, we defer them to App.~\ref{app:other_strategies}.
\fi

To prove that the derived constructions are indeed logical realizations, it remains to be shown that the gate-sequencing rules satisfy two key requirements: they implement the desired logical mapping of the \gls{cqsk} and preserve the stabilizers on the $[[n,n-2,2]]$ code. We state respective theorems in the following and defer the straightforward but laborious proofs to 
\ifshort
    an exteded version of this work in Ref.~\cite{popov2026optimized}, which also covers derivations for the compilation strategies $\mathcal{C}^{H}_{\text{low}}$ and $\mathcal{C}^{H}_{\text{high}}$.
\else
    App.~\ref{app:proofs}.
\fi

\begin{theorem}
    \label{the:pauli}
    The compilation strategies $\mathcal{C}^{H}_{\text{low}}, \mathcal{C}^{H}_{\text{mid}}, \mathcal{C}^{H}_{\text{high}}$ satisfy the physical Pauli constrains for any \gls{cqsk} circuit.
\end{theorem}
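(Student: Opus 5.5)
We verify the logical Pauli constraints in the Heisenberg picture. For each compiled circuit we propagate the physical representatives $\overline{X}_i = X_1X_{\mathcal{E}(i)}$ and $\overline{Z}_i = Z_{\mathcal{E}(i)}Z_n$ through the gate sequence, block by block. We then compare the result, modulo the stabilizer group $\mathcal{S}=\langle X^{\otimes n}, Z^{\otimes n}\rangle$, with the logical image required by the C-QSK.

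\emph{Step 1: the target map.} We first compute the target map in closed form. The C-QSK is $H_{I_h}\,\exp(-i\frac{\pi}{4}Z_{1}\cdots Z_{k})\,H_{I_h}$ up to a global phase, i.e.\ a Pauli rotation of $\pi/2$ about $Q=\prod_{i\in\overline{I_h}}Z_i\prod_{i\in I_h}X_i$. Hence every logical Pauli generator commuting with $Q$ is fixed. Every generator anticommuting with $Q$ is mapped to $\pm i\,Q\cdot(\text{generator})$. Concretely:
\begin{itemize}
\item for $i\in\overline{I_h}$, $\overline{Z}_i$ is fixed and $\overline{X}_i\mapsto \pm i\,\overline{X}_i\overline{Q}$;
\item for $i\in I_h$, $\overline{X}_i$ is fixed and $\overline{Z}_i\mapsto \pm i\,\overline{Z}_i\overline{Q}$.
\end{itemize}
Here $\overline{Q}$ is written in physical form using $\overline{X}_i$ and $\overline{Z}_i$. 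Products of $Z_{\mathcal{E}(i)}Z_n$ telescope to $Z_n^{|\overline{I_h}|}$, and products of $X_1X_{\mathcal{E}(i)}$ to $X_1^{h}$. This parity dependence explains why odd Hadamard counts carry the extra gray gates in \cref{fig:sas_rules}.

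\emph{Step 2: propagation through the blocks of $\mathcal{C}^{H}_{\text{mid}}$.} We track the three blocks \cref{eq:sequencing_rule_1,eq:sequencing_rule_2,eq:sequencing_rule_3} in turn.
\begin{itemize}
\item The CX block \cref{eq:sequencing_rule_1} maps $X_c \mapsto X_c\prod_{t\in\mathcal{E}(I_h)}X_t$ for every control $c$, and $Z_t\mapsto Z_t\prod_{c\in\mathcal{E}(\overline{I_h})}Z_c$ for every target $t$. All other generators are unchanged.
\item The IQP-like block \cref{eq:sequencing_rule_2} is a diagonal Clifford conjugated by $H$ on $\mathcal{E}(I_h)$. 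For $t\in\mathcal{E}(I_h)$ it acts as $Z_t\mapsto Y_t\prod_{t'\neq t}X_{t'}$ and fixes $X_t$. It acts trivially on all wires outside $\mathcal{E}(I_h)$.
\item The diagonal block \cref{eq:sequencing_rule_3} acts as $X_c\mapsto Y_c\prod_{c'\in\mathcal{E}(\overline{I_h}),\,c'\neq c}Z_{c'}$ for $c\in\mathcal{E}(\overline{I_h})$ and fixes all $Z$ operators.
\end{itemize}

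\emph{Step 3: case analysis for $\mathcal{C}^{H}_{\text{mid}}$.} We compose these maps in four cases: $X$ versus $Z$ generators, and $i\in I_h$ versus $i\in\overline{I_h}$. Each case is split further by the parity of $h$, in which the boundary wires $1$ and $n$, and the gray gates, contribute extra factors. In each case we show that the physical output equals the required representative times an element of $\mathcal{S}$. The key identities are counting arguments: a wire receives a given Pauli factor an even or odd number of times according to $|I_h|$ or $|\overline{I_h}|$. The sign ($\pm i$ versus $Y$ conventions) we check by counting $P$ gates.

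\emph{Step 4: the other strategies.} We repeat the same computation for $\mathcal{C}^{H}_{\text{low}}$ and $\mathcal{C}^{H}_{\text{high}}$ using the blocks of App.~\ref{app:other_strategies}. Where possible we exploit the symmetry $I_h\leftrightarrow\overline{I_h}$, conjugating by the transversal $H$, to transfer the $\mathcal{C}^{H}_{\text{low}}$ argument to $\mathcal{C}^{H}_{\text{high}}$.

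The main obstacle is the bookkeeping of the boundary qubits $1$ and $n$ together with the parity-dependent gray gates. Here the telescoped products leave residual $X_1$ or $Z_n$ factors, which must be absorbed either by those gates or by multiplication with the stabilizers $X^{\otimes n}$ and $Z^{\otimes n}$. Phase tracking is a secondary concern. I would handle the boundary terms first by writing every intermediate operator as $X_1^{\alpha}Z_n^{\beta}\cdot(\text{middle part})$ and tracking the exponents $\alpha,\beta$ explicitly.
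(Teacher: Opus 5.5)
Your plan is essentially the paper's proof. Both conjugate the representatives $X_1X_{\mathcal{E}(i)}$ and $Z_{\mathcal{E}(i)}Z_n$ block by block through the sequencing rules and compare the result with the required images; the paper writes this out only for $\mathcal{C}^{H}_{\text{mid}}$ with even $h$ and $\overline{X}_i$, $i\in\overline{I_h}$, and calls the rest analogous, while your derivation of the target from the $\pi/2$ rotation about $Q$, with reduction modulo $\mathcal{S}$, is a tidier way to organize the comparison.

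I would drop the transversal-$H$ shortcut in Step~4. Modulo $\mathcal{S}$, $H^{\otimes n}$ acts as $\overline{X}_i\mapsto\overline{Z}_i\,\overline{Z}_1\cdots\overline{Z}_k$ and $\overline{Z}_i\mapsto\overline{X}_i\,\overline{X}_1\cdots\overline{X}_k$, not as logical $H^{\otimes k}$. It therefore exchanges the C-QSKs for $I_h$ and $\overline{I_h}$ only when $h$ is even; for odd $h$ it maps the C-QSK back to itself up to a logical Pauli. Even for even $h$, $H^{\otimes n}\mathcal{C}^{H}_{\text{low}}H^{\otimes n}$ puts $P_{\mathcal{E}(I_h)}\mathrm{CZ}_{\mathcal{E}(I_h)^2}$ before, rather than after, the $H_{[n]}(\cdots)H_{[n]}$ block of $\mathcal{C}^{H}_{\text{high}}$. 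Relating the two would need a separate, parity-dependent commutation check, so the direct computation you already plan is simpler.
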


\begin{theorem}
    \label{the:stabilizer}
    The compilation strategies $\mathcal{C}^{H}_{\text{low}}, \mathcal{C}^{H}_{\text{mid}}, \mathcal{C}^{H}_{\text{high}}$ preserve the $[[n,n\mathrm{-}2,2]]$ code stabilizers $X_{[n]}$ and $Z_{[n]}$.
\end{theorem}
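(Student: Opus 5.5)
The plan is to verify stabilizer preservation by direct Heisenberg-picture tracking in the binary symplectic picture. Every gate in $\mathcal{C}^{H}_{\text{low}}, \mathcal{C}^{H}_{\text{mid}}, \mathcal{C}^{H}_{\text{high}}$ is a Clifford from $\{H,P,\mathrm{CX},\mathrm{CZ}\}$. So it suffices to conjugate the two generators $X_{[n]}$ and $Z_{[n]}$ through the blocks one at a time and show that the output is again an element of $\mathcal{S}=\langle X_{[n]},Z_{[n]}\rangle$, up to a global phase.

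Shortcuts such as ``maps the logical group correctly, hence preserves $\mathcal{S}$'' do not suffice here. The symplectic complement of the logical subspace has dimension $4$ and contains destabilizers as well as stabilizers, so \cref{the:pauli} alone cannot fix the image of $X_{[n]}$. I will therefore compute directly, writing each Pauli as $(a,b)\in\mathbb{F}_2^{2n}$ and using only parity counts.

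For $\mathcal{C}^{H}_{\text{mid}}$ I track $X_{[n]}$ through the three blocks in order.

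\textbf{Block \cref{eq:sequencing_rule_1}.} Since $\mathrm{CX}_{c\to t}$ maps $X_cX_t\mapsto X_c$, each target $t\in\mathcal{E}(I_h)$ has its $X$-bit flipped $|\overline{I_h}|=k-h$ times. Each control keeps its bit, and no $Z$-part is created. The result is therefore $X_{[n]}$ with the $X$-bits on $\mathcal{E}(I_h)$ multiplied by $(k-h)\bmod 2$.

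\textbf{Blocks \cref{eq:sequencing_rule_2} and \cref{eq:sequencing_rule_3}.} Each is an $H$-conjugated or plain diagonal layer $P_A\,\mathrm{CZ}_{A\times A}$ on a set $A$ of size $m$. For a Pauli with $X$-support $S\subseteq A$, the layer contributes $Z$ on qubit $i\in A$ with multiplicity $[i\in S]+|S\setminus\{i\}|$. For $S=A$ this multiplicity is $m$ on every qubit. The inner $H_{\mathcal{E}(I_h)}$ layers exchange the roles of $a$ and $b$ on $\mathcal{E}(I_h)$, and I will track that swap explicitly.

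Carrying the parities through gives the final $(a,b)$ as a function of $h$, $k-h$ and the gates on the boundary qubits $1$ and $n$. The circuit acts on qubits $1$ and $n$ only through the gray, odd-$h$ operations of \cref{fig:sas_rules}, and I expect exactly those to be what repairs the parities. The computation of $Z_{[n]}$ is dual: CX propagates $Z$ from targets to controls, so each control in $\mathcal{E}(\overline{I_h})$ is flipped $h$ times, while the diagonal layers leave $Z$ untouched except where the $H$ layers convert it to $X$.

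The argument then splits by the parity of $h$ (equivalently of $k-h$, with $k=n-2$ even), with some interplay with the parity of $n$. In each case I expect the accumulated multiplicities to combine into one of $X_{[n]}$, $Z_{[n]}$, $X_{[n]}Z_{[n]}$ or $I$ on the data wires. I then check that the boundary-qubit gates complete this to a full stabilizer element, and since the map is invertible, the image must be a non-identity element. For $\mathcal{C}^{H}_{\text{low}}$ and $\mathcal{C}^{H}_{\text{high}}$ I reuse the same block lemmas (``CX block on all-ones'' and ``diagonal clique layer on all-ones''), changing only the index sets to those in their sequencing rules from App.~\ref{app:other_strategies}.

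The main obstacle is the bookkeeping of the odd-$h$ case. There the parities left on $\mathcal{E}(I_h)$ versus $\mathcal{E}(\overline{I_h})$ do not cancel among the data wires, so the exact form of the gray corrections on qubits $1$ and $n$ is essential. I would first isolate these as a small residual circuit and verify its action on the residual Pauli separately. As a sanity check on the parity formulas, I would compare the result against the $k=4$ instances of \cref{fig:blocks}.
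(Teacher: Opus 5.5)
Your plan is the paper's approach at the top level: conjugate $X_{[n]}$ and $Z_{[n]}$ block by block through the sequencing rules and check that the image lies in $\mathcal{S}$. The difference is in the key step, the CZ clique.

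\textbf{The CZ clique step.} The paper first turns $X_{\mathcal{E}(\overline{I_h})}$ into $Y_{\mathcal{E}(\overline{I_h})}$ with the $P$ layer. It then writes the clique as a product over the (oddly many) perfect matchings of the complete graph on $\mathcal{E}(\overline{I_h})$, each of which exchanges $Y$ and $\pm X$. Your degree-parity count reaches the same conclusion in one line ($k-h$ even, so no net $Z$). It also works for odd-size sets, where perfect matchings do not exist, and it makes the even/odd-$h$ split transparent.

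\textbf{Extending to $\mathcal{C}^{H}_{\text{low}}$ and $\mathcal{C}^{H}_{\text{high}}$.} Here you need the rule in its general form: qubit $i$ gains $Z$ with multiplicity $[i\in T]\,a_i+\sum_{j\sim i}a_j$, where $T$ is the set of wires carrying $P$ and $\sim$ is adjacency in the CZ pattern. The special case ``$P$ and a clique on the same set $A$, with $S=A$'' is not enough. For even $h$, $\mathcal{C}^{H}_{\text{low}}$ already applies $P_{[n]\setminus\mathcal{E}(\overline{I_h})}$ and CZ on $[n]^2\setminus\mathcal{E}(\overline{I_h})^2$, both of which touch qubits $1$ and $n$. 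So your remark that only gray gates act on the boundary qubits holds only for $\mathcal{C}^{H}_{\text{mid}}$. With the general rule the case closes at once: each $i\in\mathcal{E}(\overline{I_h})$ has $h+2$ neighbours and every other qubit collects $1+(n-1)$ contributions, and both numbers are even.

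\textbf{What you add, and what both proofs lack.} Your point that \cref{the:pauli} cannot replace this check is correct and worth stating. So is your use of $U\mathcal{S}U^\dagger=\mathcal{S}$ as the criterion rather than generator-wise invariance. What both arguments give up is the sign. You work ``up to a global phase'' and the paper ends with $(\pm 1)X$, yet $X_{[n]}\mapsto -X_{[n]}$ would not preserve the code space. In the even $\mathcal{C}^{H}_{\text{mid}}$ case the sign is in fact $+$, because a CZ clique sends $Y_A\mapsto +X_A$ for $|A|$ even. (The paper's identity $\mathrm{CZ}_{i,j}Y_iY_j\mathrm{CZ}_{i,j}=-X_iX_j$ should read $+X_iX_j$.) A short phase check would close this gap for both proofs.

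\textbf{Corrections to your plan.} First, the CX block. A target bit that starts at $1$ and is flipped $k-h$ times ends at $1\oplus\bigl((k-h)\bmod 2\bigr)$, not at $(k-h)\bmod 2$. So for even $h$ the $X$ on $\mathcal{E}(I_h)$ survives, which is the paper's $X_{\mathcal{E}(I_h)}^{|\mathcal{E}(\overline{I_h})|}=\mathbb{I}$, and it vanishes only for odd $h$. Read literally, your formula would make the even case fail, and that case has no gray gates to compensate.

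Second, $n$ must be even for $X_{[n]}$ and $Z_{[n]}$ to commute. There is therefore no separate dependence on the parity of $n$.

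Third, the odd-$h$ case. Under the non-gray gates, the corrected counts give $X_{[n]}\mapsto X_1Y_{\mathcal{E}(\overline{I_h})}X_n$ and $Z_{[n]}\mapsto Z_1Y_{\mathcal{E}(I_h)}Z_n$, up to sign. Both restrictions to the data wires are non-uniform. Any circuit that factorizes into a data part and a boundary part cannot repair them, so the gray gates must act on data wires as well. You should therefore not isolate them as a boundary residual appended at the end; keep them in place in the same tableau computation. The paper only asserts that this case follows ``in complete analogy'', so on this point your plan and its proof are equally incomplete.
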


With that, we derived provably correct and stabilizer-preserving compilation strategies $\mathcal{C}^{H}_{\text{low}}$, $\mathcal{C}^{H}_{\text{mid}}$, $\mathcal{C}^{H}_{\text{high}}$ which are depth-optimized for varying Hadamard counts. As the $\mathcal{C}^{H}_{\text{mid}}$ compilation strategy is equivalent to the \gls{sas} approach in \cref{fig:sas_rules}, for the mid-regime in \cref{eq:selection_rules}, both methods produce compiled circuits of the same depth. However, as shown in \cref{sec:compilation_primitives} for sparse and dense Hadamard placement, one can significantly improve depth by switching to other strategies $\mathcal{C}^{H}_{\text{low}}, \mathcal{C}^{H}_{\text{high}}$. With that, we emphasize that the combination of the three strategies allows for the realization of the \gls{pbs} compiler introduced in \cref{sec:compilation_primitives}, which we will empirically evaluate in the following.


\section{\label{sec:evaluation}Empirical Evaluation}

In this section, we empirically evaluate the \gls{pbs} compilation strategy with the selection terms given in \cref{eq:selection_rules} switching between $\mathcal{C}^{H}_{\text{low}}$, $\mathcal{C}^{H}_{\text{mid}}$, and $\mathcal{C}^{H}_{\text{high}}$, respectively. For the experiments, we implemented compilation, noise, and evaluation of \glspl{cqsk} on the $[[n,n\mathrm{-}2,2]]$ code using Qiskit~\cite{Qiskit}. The simulations are conducted using the $\texttt{AerSimulator}$ with the \texttt{extended-stabilizer} method.

The evaluation procedure is given a system size $k$, a Hadamard-position set $I_h$, and depolarizing noise rates $p_1$ and $p_2$ (non-correlated for single-qubit gates and correlated for two-qubit gates, respectively). We then generate a logical \gls{cqsk} circuit both via the \gls{sas} approach and using our \gls{pbs} compilation strategy. The assembled pipeline comprises \gls{qec} \emph{encoder}, the respective \emph{logical \gls{cqsk}} realization, \emph{syndrome detection}, and \emph{measurement} of both data wires and stabilizers. To isolate the effect of noise on the logical computation, the encoder is compiled into a single Clifford instruction and treated as noiseless, and the syndrome extraction and measurement stage is likewise assumed ideal.

The syndrome-detection measurement, together with the data readout yields empirical distributions over outcomes of the form \((x,s)\), where \(x \in \{0,1\}^n\) is the \(n\)-bit string obtained from the data register, and \(s \in \{00,01,10,11\}\) is the two-bit syndrome string obtained from the stabilizer measurement. We define the \emph{acceptance rate} $p_{\mathrm{acc}}$ as the fraction of outcomes, for which the syndrome yields $00$. Complimentarily, the \emph{success rate} $p_{\mathrm{succ}}$ is the fraction of runs, where the $00$ syndrome is observed, and there was actually no fault in the computation. 

\begin{figure}[tbp]
  \centering
  \includegraphics[width=0.8\linewidth]{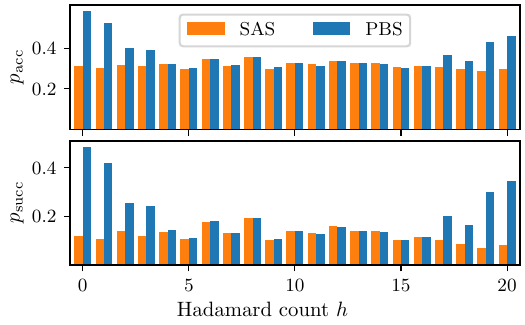}
  \caption{Comparison of \gls{sas} and \gls{pbs} compilation strategies for \glspl{cqsk} on $k=20$ qubits across Hadamard counts $h$. We analyze the (upper plot) acceptance rate $p_{\mathrm{acc}}$ and the (lower plot) success rate $p_{\mathrm{succ}}$, identifying advantages of \gls{pbs} in the edge regimes of sparse and dense Hadamard placements.}
  \label{fig:evaluation}
\end{figure}

For a system size of $k=20$ and noise strength $p_1=0.01$ and $p_2=0.01$ (evaluations with more realistic noise models are deferred to future work), we compare $p_{\mathrm{acc}}$ and $p_{\mathrm{succ}}$ resulting from \gls{sas} and \gls{pbs} compilation in \cref{fig:evaluation}. As suggested by the equivalence of \gls{sas} and the compilation strategy $\mathcal{C}^{H}_{mid}$ selected by \gls{pbs} for moderate Hadamard counts (see \cref{eq:selection_rules}), for $6 \leq h \leq 14$ both methods are on par up to minor statistical fluctuations. However, in the edge regimes, i.e.\ low Hadamard counts $h \leq 5$ and high counts $h \geq 15$, the strategies $\mathcal{C}^{H}_{low}$ and $\mathcal{C}^{H}_{high}$ are selected by \gls{pbs}, respectively. For these circuit structures, one can observe a significantly higher $p_{\mathrm{acc}}$ and $p_{\mathrm{succ}}$, likely caused by the shallower logical realization and two-qubit count being the dominant error driver in realistic mid-term setups. Empirically, we observed that for larger $k$ the differences between \gls{sas} and \gls{pbs} in the edge regimes become even more pronounced, highlighting the promise of the newly introduced compilation strategy.


\glsresetall
\section{\label{sec:discussion}Discussion and Outlook}

In this work, we introduced a novel technique for the logical compilation of Clifford circuits on particular code families. Starting with small \gls{cqsk} instances, we enumerated the full \gls{lcs} solution space~\cite{rengaswamy2018synthesis}, profiled depth and two-qubit counts across circuit configurations, and extracted depth-favorable compilation primitives. We then distilled these components into compilation strategies that generalize to arbitrary system sizes, yielding the 
\gls{pbs} routine that lowers two-qubit gate counts relative to \gls{sas}~\cite{chen2024tailoring,chen2025tailoring}, which itself already significantly improved upon gate-by-gate compilation. The gains are particularly prevalent for \gls{cqsk} with sparse and dense Hadamard placements. In our noise studies, we identified circuit depth as the dominant driver of logical performance; correspondingly, due to reduced two-qubit gate count, we also observed improvements in success probability.

Despite these promising insights, limitations remain: the method is demonstrated on the $[[n,n\mathrm{-}2,2]]$ code family (supporting error detection but not correction), focuses on \gls{cqsk} and thus Clifford circuits, and does not yet enforce full fault tolerance. We view the first two as matters of scope rather than principle: because our workflow mines small instances and generalizes primitives into closed-form templates verified by Pauli-constraint and stabilizer-preservation checks, it can be applied to other code families with higher distances. The same flexibility allows broadening beyond \gls{cqsk} to other Clifford subroutines, enhancing usefulness in a peephole optimization workflow. While the approach does not generalize to arbitrary non-Clifford routines, it may extend to circuits with limited non-Clifford components, e.g., a single $T$ gate within the quantum simulation kernel. However, such considerations are beyond the scope of this work. The third limitation, i.e..\ lack of enforced fault tolerance can be addressed by embedding flag gadgets~\cite{chao2018fault}, as demonstrated in~\cite{chen2024tailoring}.

In summary, our approach provides a compact and extensible tool for improving peephole-based logical compilation: it identifies compilation primitives on small instances and generalizes these with limited hand-crafting overhead to scalable compilation strategies.

\section*{Acknowledgment}
We thank N. Rengaswamy and Z. Chen for insightful discussions of their work, which inspired this study.
We furthermore thank J. Jordan for administrative support and constructive feedback.

\smallskip
\noindent The authors used GPT-5.1 for language editing of the paper. All content was reviewed and edited by the authors, who take full responsibility for the final work.

\ifshort
\else
    \section*{Code Availability}
    The code supporting this study is currently undergoing final polishing and will be released publicly at a later stage. Further information and data is available upon reasonable request.
\fi

\ifshort
\else
    \glsresetall
    \appendix
    \subsection{\label{app:other_strategies}Other Compilation Strategies for Piecewise-Best Selection}

In the following, we derive the compilation strategies $\mathcal{C}^{H}_{low}$ for low Hadamard count in App.~\ref{subapp:low_hadamard} and $\mathcal{C}^{H}_{high}$ for high Hadamard count in App.~\ref{subapp:high_hadamard}. In large parts, the methodology is the same as in \cref{subsec:compilation_strategy_moderate}, but a few additional insights are introduced. As already explained in the main part, the combination of the three compilation strategies allows us to define the \gls{pbs} strategy, which produces depth-optimized logical realizations in all Hadamard regimes.


\medskip
\subsubsection{\label{subapp:low_hadamard}Compilation Strategy for Low Hadamard Counts}
While the derivations only slightly differ, there are different structural patterns for \glspl{cqsk} with even and with odd Hadamard count, which we consider separately in the following:

\smallskip
\noindent\emph{Even number of Hadamard gates.} Comparing circuits generated with $\mathcal{S}^{H}_{low}$ to $\mathcal{S}^{H}_{mid}$, we observe that the first two blocks (i.e.\ the CNOT-entanglement layer and the IQP-like structure) are identical and thus follow \cref{eq:sequencing_rule_1,eq:sequencing_rule_2}. The only difference lies in the final Z-diagonal block, where the additional $P$-layer follows the rule
\begin{align}
    \label{eq:low_block_3_1}
    P_{[n]\setminus \mathcal{E}(\overline{I_h})},
\end{align}
that is, $P$ gates are applied on all wires except those belonging to $\mathcal{E}(\overline{I_h})$.
For the CZ-structure, instead of directly analyzing which sequencing rules generate the circuit blocks, it is more instructive to examine what is missing compared to the fully connected block. For the instance shown in \cref{fig:low_even_CZ_block}, exactly those $CZ$ gates whose control and target both lie in $\mathcal{E}(\overline{I_h})$ are absent. The missing gate is highlighted by the red oval. This pattern holds systematically for all \glspl{cqsk}, leading to the following compact formula:
\begin{align}
    \label{eq:low_block_3_2}
    \mathrm{CZ}_{\{(i,j) \in [n]^2\setminus\mathcal{E}(\overline{I}_h)^2 \mid\; j>i \}}
\end{align}
Combining \cref{eq:low_block_3_1,eq:low_block_3_2} gives us the sequencing rule for the third block of $\mathcal{C}^{H}_{low}$ in the case of even Hadamard count.

\smallskip
\noindent\emph{Odd number of Hadamard gates.} As before, the first and second block in this case are identical to the strategy for $\mathcal{S}^{H}_{mid}$. Analyzing representative examples in \cref{fig:low_odd_third_block}, we get the  sequencing formula for the $P$ gates as
\begin{align}
    \label{eq:low_odd_block_3_1}
    P_{i \in \{1\} \cup \mathcal{E}(I_h)}.
\end{align}
Adopting the convention that for each $CZ$ gate the control is the upper wire and the target on the lower wire, the following pattern holds: for every wire $i \in [n] \setminus \mathcal{E}(\overline{I_h})$ we obtain  
\begin{align}
    \mathrm{CZ}_{\{(i,j) \in \{i\} \times [n] \mid\; j>i\}},
\end{align}
while for every wire $i \in \mathcal{E}(\overline{I_h})$ we obtain the block  
\begin{align}
    \mathrm{CZ}_{\{(i,j) \in \{i\} \times \mathcal{E}(I_h) \mid\; j>i\}}.
\end{align}
In \cref{fig:low_odd_third_block}, these two types of blocks are separated by dashed lines. By iterating through all wires, the complete sequencing rule for this block can be written as  
\begin{align}
    \label{eq:low_odd_block_3_2}
    \mathrm{CZ}_{\{(i,j) \in [n]\setminus \mathcal{E}(\overline{I_h}) \times [n] \mid \; j > i\}} \cdot \mathrm{CZ}_{\{(i,j) \in \mathcal{E}(\overline{I_h}) \times \mathcal{E}(I_h) \mid\; j > i\}},
\end{align}
which combined with \cref{eq:low_odd_block_3_1} completes the set of sequencing rules for $\mathcal{C}^{H}_{low}$.

\begin{figure}[tbp]
  \centering
  \subfigure[\label{fig:low_even_CZ_block}Part of third block of \cref{fig:building_blocks} for even Hadamard count. Compared to the fully connected instance, the $CZ$ gate on the wire pair $\mathcal{E}(\overline{I_h})^{2}=\{(3,5)\}$ is omitted, yielding \cref{eq:low_block_3_2}.
]{
    \includegraphics[width=0.99\linewidth]{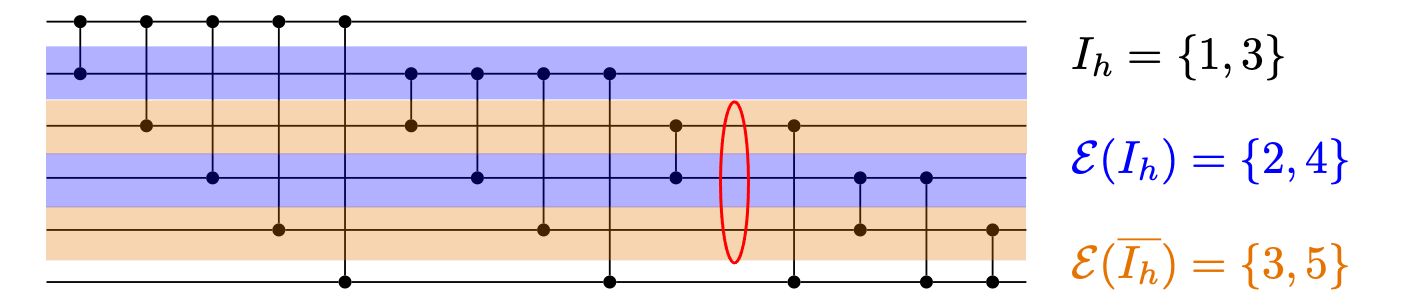}
  }\\
  \subfigure[\label{fig:low_odd_third_block}Third block of \cref{fig:building_blocks} for odd Hadamard count. The \(P\)-layer acts on the first wire and wires in $\mathcal{E}(I_h)$. Wire by wire, if the upper control lies in $\lbrack n\rbrack\setminus \mathcal{E}(I_h)$, all lower qubits are targeted by $CZ$ gates; otherwise, an upper qubit in $\mathcal{E}(I_h)$ couples via $CZ$ only to qubits in $\mathcal{E}(I_h)$, yielding \cref{eq:low_odd_block_3_2}.
]{
    \includegraphics[width=0.99\linewidth]{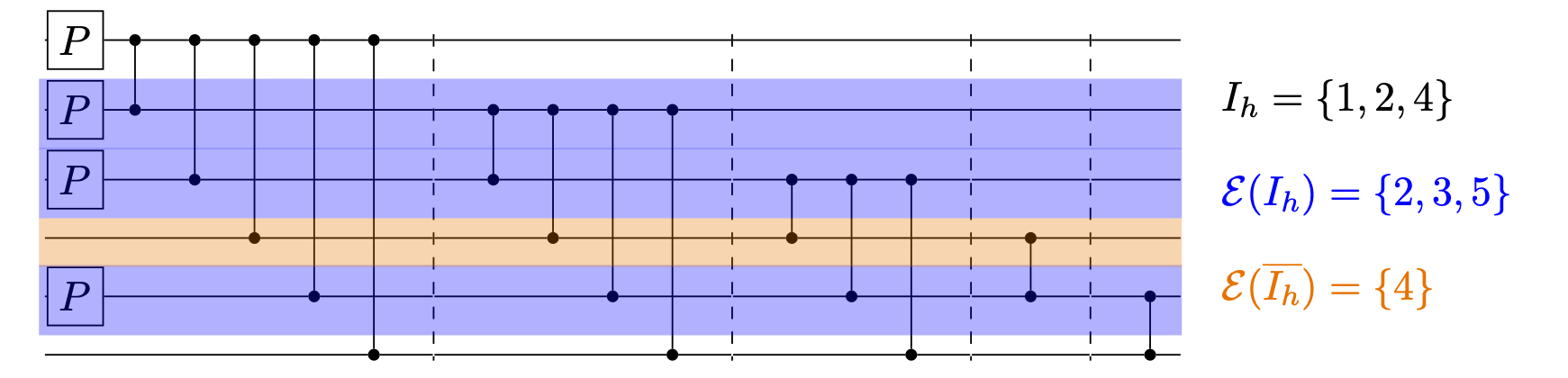}
  }
  \caption{Exemplary instances of the blocks from \cref{fig:building_blocks}, corresponding to \gls{lcs} solution $\mathcal{S}^{H}_{\text{low}}$. The underlying \gls{cqsk} circuits are of size $k=4$.}
\end{figure}


\medskip
\subsubsection{\label{subapp:high_hadamard}Compilation Strategy for High Hadamard Counts}
As the derivations for $\mathcal{C}^{H}_{high}$ follow the same ideas as for $\mathcal{C}^{H}_{low}$ and $\mathcal{C}^{H}_{mid}$, we directly state the sequencing rules in the following.

\smallskip
\noindent\emph{Even number of Hadamard gates.}
The sequencing rules for the first and third block are identical between $\mathcal{C}^{H}_{mid}$ and $\mathcal{C}^{H}_{high}$. For the second layer, it is easy to see that Hadamard gates are applied on every wire $i \in [n]$ and that the $P$ gates act only on wires in $[n] \setminus \mathcal{E}(I_h)$, which yields
\begin{align}
H_{[n]} \qquad \text{and} \qquad P_{[n] \setminus \mathcal{E}(I_h)}.
\end{align}
Proceeding as above and comparing the CZ structure with the fully connected circuit, we obtain
\begin{align}
\mathrm{CZ}_{\{(i,j) \in [n]^2\setminus \mathcal{E}(I_h)^2 \mid \; j>i \}}.
\end{align}

\smallskip
\noindent\emph{Odd number of Hadamard gates.}
This last case just replaces the sequencing rules for the second layer with
\begin{align}
H_{[n]} \qquad \text{and} \qquad
P_{\mathcal{E}(\overline{I_h}) \cup \{n\}},
\end{align}
as well as  
\begin{align}
\mathrm{CZ}_{(i,j) \in [n]^2\setminus\big( (\mathcal{E}(I_h) \cup \{1\})\times \mathcal{E}(I_h) \big)}.
\end{align}


\subsection{\label{app:proofs}Proof of Theorems}

In the following, we prove \cref{the:pauli,the:stabilizer} showing the correctness of the compilation strategies we derived.

\smallskip
\subsubsection{Physical Pauli Constraints}
\begin{theorem*}
    The compilation strategies $\mathcal{C}^{H}_{\text{low}}, \mathcal{C}^{H}_{\text{mid}}, \mathcal{C}^{H}_{\text{high}}$ satisfy the physical Pauli constraints for any \gls{cqsk} circuit.
\end{theorem*}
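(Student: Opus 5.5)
The plan is to reduce the statement to a closed form for the logical action and then verify it by Heisenberg-picture tracking through the three blocks.

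\emph{Step 1: closed form of the logical constraints.} The \gls{cqsk} with Hadamard set $I_h$ is the conjugation of $P$ on the last wire by a CNOT ladder and the Hadamard layer. It therefore equals, up to global phase, the Pauli rotation $\exp\!\big(-i\tfrac{\pi}{4}Q\big)$ with
\begin{align}
Q=\prod_{i\in I_h}X_i\prod_{i\notin I_h}Z_i .
\end{align}
For any Pauli $R$, this rotation maps $R\mapsto R$ if $[R,Q]=0$, and $R\mapsto iQR$ otherwise. Hence $\overline{X}_i$ is fixed for $i\in I_h$ and mapped to $i\overline{Q}\,\overline{X}_i$ for $i\notin I_h$. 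Symmetrically, $\overline{Z}_i$ is fixed for $i\notin I_h$ and mapped to $i\overline{Q}\,\overline{Z}_i$ for $i\in I_h$. On the code, $\overline{Q}$ is obtained by substituting $X_1X_{\mathcal{E}(i)}$ and $Z_{\mathcal{E}(i)}Z_n$. Its $X_1$ and $Z_n$ factors then depend only on the parities of $h$ and $k-h$. This is exactly where the even/odd split in the sequencing rules comes from, so I will carry the two parity cases separately throughout.

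\emph{Step 2: verify $\mathcal{C}^{H}_{\text{mid}}$.} I propagate $X_1X_{\mathcal{E}(i)}$ and $Z_{\mathcal{E}(i)}Z_n$ through the three blocks \cref{eq:sequencing_rule_1,eq:sequencing_rule_2,eq:sequencing_rule_3} using the standard conjugation rules:
\begin{itemize}
\item CX copies $X$ from control to target and $Z$ from target to control;
\item $P$ sends $X\mapsto Y$;
\item $\mathrm{CZ}_{ab}$ sends $X_a\mapsto X_aZ_b$;
\item $H$ swaps $X$ and $Z$.
\end{itemize}
Because every block acts uniformly on the sets $\mathcal{E}(I_h)$ and $\mathcal{E}(\overline{I_h})$, the image of each generator is a product over these sets. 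I will split into four cases, according to whether the generator is of $X$- or $Z$-type and whether $i\in I_h$.
\begin{itemize}
\item In the two commuting cases, I show the accumulated $Z$-strings and $X$-strings cancel pairwise.
\item In the two anticommuting cases, I show they assemble to $i\overline{Q}\,\overline{R}$.
\end{itemize}
In both situations the comparison is only required modulo the stabilizers $X_{[n]}$ and $Z_{[n]}$, which is what absorbs the stray factors on wires $1$ and $n$. The key counting fact is that $\mathrm{CZ}$ among all pairs of a set $S$ maps $X_a$ ($a\in S$) to $X_a Z_{S\setminus\{a\}}$; combined with $P_a$, this gives $Y_aZ_{S\setminus\{a\}}$, i.e.\ $i$ times a string proportional to $Z_S$.

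\emph{Step 3: reduce $\mathcal{C}^{H}_{\text{low}}$ and $\mathcal{C}^{H}_{\text{high}}$ to $\mathcal{C}^{H}_{\text{mid}}$.} Rather than redo the computation, I observe that these strategies differ from $\mathcal{C}^{H}_{\text{mid}}$ only in one diagonal (respectively Hadamard-conjugated diagonal) block. So each can be written as $\mathcal{C}^{H}_{\text{mid}}\cdot D$ (respectively with $D$ inserted between the Hadamard layers), where $D$ is a $P/\mathrm{CZ}$ circuit supported on the difference of index sets. For example, in the even low case, $D=P_{[n]}\mathrm{CZ}_{\text{all pairs}}$ times the inverse of the mid third block restricted appropriately. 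It then suffices to show that $D$ acts as the logical identity modulo stabilizers, i.e.\ $D$ maps each $X_1X_{\mathcal{E}(i)}$ to itself times a power of $Z_{[n]}$ and a phase that cancels. This uses the same counting fact. The main point is that $\mathrm{CZ}$ on all pairs of $[n]$ together with $P_{[n]}$ sends $X_aX_b\mapsto -\,Y_aY_b Z_{[n]\setminus\{a,b\}}=X_aX_bZ_{[n]}$. For $\mathcal{C}^{H}_{\text{high}}$ the same argument applies with the roles of $X$ and $Z$ exchanged by the surrounding $H_{[n]}$.

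\emph{Main obstacle.} I expect the main obstacle to be phase bookkeeping: the signs from $Y=iXZ$ and from reordering anticommuting factors must produce exactly $+i$ rather than $-i$. This is most delicate in the odd-$h$ cases, where the extra $P$ on wire $1$ or $n$ and the asymmetric CZ sets of \cref{eq:low_odd_block_3_2} contribute. To control this, I will first reduce each check to binary symplectic vectors, verifying the images modulo phase. I will then fix the sign by a single commutation argument: conjugation preserves hermiticity and commutation with the stabilizers. Only the overall $\pm$ remains, and I will settle it on the smallest instance, cross-checked against the \gls{lcs} solutions for $k\le 4$.
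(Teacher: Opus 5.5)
Your Steps 1 and 2 contain the paper's argument and go somewhat beyond it. The paper states the target relation and pushes $X_1X_k$ through the three $\mathcal{C}^{H}_{\text{mid}}$ blocks in one representative case (even $h$, $k\in\mathcal{E}(\overline{I_h})$). It asserts everything else, including $\mathcal{C}^{H}_{\text{low}}$ and $\mathcal{C}^{H}_{\text{high}}$, ``in complete analogy''. Deriving the targets from $\exp(-i\frac{\pi}{4}Q)$ is an improvement.

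Fix one convention, though. Under $R\mapsto URU^\dagger$, the convention in which $P\colon X\mapsto Y$ as in your Step 2, an anticommuting $R$ goes to $-iQR$, not $iQR$. With $-i$ you recover exactly the paper's $X_1Y_kX_{\mathcal{E}(I_h)}Z_{\mathcal{E}(\overline{I_h})\setminus\{k\}}$. In the mid computation the sign of every conjugation can be tracked symbolically, so there is no need to ``settle the overall $\pm$ on the smallest instance''. Such a check could not certify signs for all $k$ and $I_h$ anyway.

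The genuine gap is Step 3. Put $S=\mathcal{E}(\overline{I_h})$ and $W=P_{[n]}\mathrm{CZ}_{\mathrm{all}}$. The mid third block is $M_3=P_S\mathrm{CZ}_S$, and the even low third block is $L_3=P_{[n]\setminus S}\,\mathrm{CZ}_{[n]^2\setminus S^2}=WM_3^{-1}$. So what you describe as $D$ ($W$ times the inverse mid block) is $L_3$ itself. The true difference is $D=L_3M_3^{-1}=WM_3^{-2}=WZ_S$.

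\begin{itemize}
\item The factor $W$ is harmless. It acts as $W|x\rangle=i^{|x|}(-1)^{\binom{|x|}{2}}|x\rangle$, which depends only on the parity of $|x|$, so $W=e^{i\pi/4}\exp(-i\frac{\pi}{4}Z_{[n]})$ fixes $X_aX_b$ exactly. Your formula $X_aX_b\mapsto -Y_aY_bZ_{[n]\setminus\{a,b\}}$ is wrong: the two CZ-stars overlap on $[n]\setminus\{a,b\}$ and cancel.
\item The factor $Z_S=\prod_{i\notin I_h}\overline{Z}_i$ is not harmless. It is a nontrivial logical Pauli, so $D$ sends $X_1X_{\mathcal{E}(i)}\mapsto -X_1X_{\mathcal{E}(i)}$ for every $i\notin I_h$. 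The phase does not cancel.
\end{itemize}

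Concretely, for $k=4$, $I_h=\{1,2\}$ the low rule maps $X_1X_4\mapsto -X_1X_2X_3Y_4Z_5$. The target, and $\mathcal{C}^{H}_{\text{mid}}$, give $+X_1X_2X_3Y_4Z_5$. The same happens for even-$h$ $\mathcal{C}^{H}_{\text{high}}$: the difference $H_{[n]}WZ_{\mathcal{E}(I_h)}H_{[n]}$ leaves the logical Pauli $X_{\mathcal{E}(I_h)}=\prod_{i\in I_h}\overline{X}_i$.

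So your reduction, done correctly, shows that the low and high rules as written implement the kernel only up to a logical Pauli. The Pauli constraints hold at the binary-symplectic level, where LCS operates. At that level Step 3 does go through, since $Z_S$ is a Pauli and $W$ is a rotation about a stabilizer. They do not hold with signs.

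You must either state the result modulo signs or add a Pauli correction to the rules. For example, using $P^\dagger$ instead of $P$ on $[n]\setminus S$ turns $D$ into $P^\dagger_{[n]}\mathrm{CZ}_{\mathrm{all}}$, which is trivial on even-weight strings. The paper's proof never confronts this because it does not actually check the low and high rules.
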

\noindent
We explicitly execute the proof for $\mathcal{C}^{H}_{mid}$ in the case of an even number of Hadamard gates and for cases where $k \in \mathcal{E}(\overline{I_h})$ is the mapping relation for the physical operators $X_1 X_k$, which is given by 
\begin{align}
X_1X_{k} \mapsto
\begin{cases}
\begin{aligned}
    &X_1X_{k}, && k\in \mathcal{E}(I_h),\\
    &X_1Y_{k}X_{\mathcal{E}(I_h)}Z_{\mathcal{E}(\overline{I_h})\setminus \{k\}},
      && k\in \mathcal{E}(\overline{I_h}).
\end{aligned}
\end{cases}
\end{align}
All remaining mapping relations, namely the case $k \in \mathcal{E}(I_h)$ for $X_1 X_k$,
the relations for $Z_k Z_n$ for both $k \in \mathcal{E}(I_h)$ and
$k \in \mathcal{E}(\overline{I_h})$, the corresponding cases for an odd number
of Hadamard gates, as well as the other compilation strategies $\mathcal{C}^{H}_{low}$, $\mathcal{C}^{H}_{high}$, can be shown in complete analogy by following the same procedure.

\begin{proof}
The first gate block of the operator $X_1 X_k$ is conjugated by $\mathrm{CX}_{\mathcal{E}(I_h)\to \mathcal{E}(\overline{I_h})}$:
\begin{align}
&\mathrm{CX}_{\mathcal{E}(I_h) \to \mathcal{E}(\overline{I_h})}
\left( X_1 X_k \right)
\mathrm{CX}_{\mathcal{E}(I_h) \to \mathcal{E}(\overline{I_h})}
= X_1 X_k X_{\mathcal{E}(\overline{I_h})}
\end{align}
The conjugation with the $H_{\mathcal{E}(I_h)}$ and $P_{\mathcal{E}(I_h)}$ layer only effects qubits on the $\mathcal{E}(I_h)$ wires, hence 
\begin{align}
&P_{\mathcal{E}(I_h)}H_{\mathcal{E}(I_h)}
\left( X_1 X_k X_{\mathcal{E}(I_h)} \right)
H_{\mathcal{E}(I_h)}P_{\mathcal{E}(I_h)}
= X_1 X_k Z_{\mathcal{E}(I_h)} .
\end{align}
The conjugation by $\mathrm{CZ}_{\{(i,j)\in \mathcal{E}(I_h)^2 \mid j> i\}}$ has no effect on $X_1 X_k Z_{\mathcal{E}(I_h)}$, since qubits $1$ and $k$ lie outside $\mathcal{E}(I_h)$ and $Z$ gates commute with $CZ$ gates.
The subsequent $H_{\mathcal{E}(I_h)}$ transforms $Z_{\mathcal{E}(I_h)}$ into $X_{\mathcal{E}(I_h)}$, acting only on qubits in $\mathcal{E}(I_h)$.
Next, $P_{\mathcal{E}(\overline{I_h})}$ acts only on $X_k$ (because $1 \notin \mathcal{E}(\overline{I_h})$ and $\mathcal{E}(I_h) \cap \mathcal{E}(\overline{I_h}) = \varnothing$) and turns it into $Y_k$.
Hence, before the final $\mathrm{CZ}$-block, the operator becomes $X_1 Y_k X_{\mathcal{E}(I_h)}$. Conjugation by $\mathrm{CZ}_{\{(i,j)\in \mathcal{E}(\overline{I_h})^2 \mid j> i\}}$ then yields

\begin{align}
&\mathrm{CZ}_{\{(i,j)\in \mathcal{E}(\overline{I_h})^2 \mid j> i\}}
\left( X_1 Y_k
X_{\mathcal{E}(I_h)} \right)
\mathrm{CZ}_{\{(i,j)\in \mathcal{E}(\overline{I_h})^2 \mid j> i\}}\notag
\\[0.3em]
&= X_1 Y_k Z_{\mathcal{E}(\overline{I_h})\setminus \{k\}} X_{\mathcal{E}(I_h)}.
\end{align}
\end{proof}


\smallskip
\subsubsection{Stabilizer Preservation}

\begin{theorem*}
    The compilation strategies $\mathcal{C}^{H}_{\text{low}}, \mathcal{C}^{H}_{\text{mid}}, \mathcal{C}^{H}_{\text{high}}$ preserve the $[[n,n\mathrm{-}2,2]]$ code stabilizers $X_{[n]}$ and $Z_{[n]}$.
\end{theorem*}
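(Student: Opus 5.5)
We propose to prove the statement by tracking the two stabilizer generators $X_{[n]}$ and $Z_{[n]}$ through the gate blocks of each strategy, exactly as in the proof of the Pauli constraints above. We spell out the case of $\mathcal{C}^{H}_{mid}$ with an even Hadamard count $h$. The odd case and $\mathcal{C}^{H}_{low}$, $\mathcal{C}^{H}_{high}$ differ only in the single-qubit and $\mathrm{CZ}$ layers, so they will be handled by the same bookkeeping.

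A useful simplification is the following. The conjugated stabilizer need not equal $X_{[n]}$ or $Z_{[n]}$ on the nose. It suffices that it lies in $\mathcal{S}=\langle X_{[n]},Z_{[n]}\rangle$, i.e., that it equals $\pm X_{[n]}$, $\pm Z_{[n]}$ or $\pm Y_{[n]}$ up to a phase, and we will also track the sign. Indeed, since $n$ is even, $X_{[n]}Z_{[n]}=(-1)^{n/2}\,i^{n}\,Y_{[n]}$ up to the ordering convention, so $Y_{[n]}$ appears in $\mathcal{S}$ only with a specific sign. It would also suffice to show that the circuit maps $\mathcal{S}$ into itself together with the logical operators, but we will do the direct conjugation.

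\textbf{Step 1: the $Z$-stabilizer.} We first conjugate $Z_{[n]}$ by the $\mathrm{CX}_{\mathcal{E}(\overline{I_h})\to\mathcal{E}(I_h)}$ block. Each $Z$ on a target propagates to every control. Each control $i\in\mathcal{E}(\overline{I_h})$ therefore picks up $Z^{h}$, which is the identity for even $h$, so $Z_{[n]}$ is mapped to itself. Next, the $H_{\mathcal{E}(I_h)}$ layer turns $Z_{\mathcal{E}(I_h)}$ into $X_{\mathcal{E}(I_h)}$, and the $P_{\mathcal{E}(I_h)}$ layer turns it into $Y_{\mathcal{E}(I_h)}$. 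Each $X_i$ under $\mathrm{CZ}_{i,j}$ with $i,j\in\mathcal{E}(I_h)$ acquires a $Z_j$. Every qubit of $\mathcal{E}(I_h)$ receives $h-1$ such factors, which is odd. This turns $Y_{\mathcal{E}(I_h)}$ back into $X_{\mathcal{E}(I_h)}$ up to phase, and the final $H_{\mathcal{E}(I_h)}$ returns $Z_{\mathcal{E}(I_h)}$. On $\mathcal{E}(\overline{I_h})\cup\{1,n\}$ we only have $Z$'s, which commute with the diagonal third block. So $Z_{[n]}\mapsto \pm Z_{[n]}$, and the sign will be read off from the $P^\dagger XP$ versus $Y$ conventions.

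\textbf{Step 2: the $X$-stabilizer.} The $\mathrm{CX}$ block maps $X_{[n]}$ to itself, because each target already carries an $X$ and receives an even number, $|\overline{I_h}|=k-h$ with $k=n-2$, of extra $X$ factors. The remaining case arises when $k-h$ is odd; there we expect the $X$ pattern to be absorbed so that the total becomes a product of $X_{[n]}$ and $Z_{[n]}$. The second block acts on $X_{\mathcal{E}(I_h)}$ through $H$, $P$, $\mathrm{CZ}$, $H$ by the mirror image of Step 1, leaving $X_{\mathcal{E}(I_h)}$ up to $Z$-corrections. The third block then acts on $X_{\mathcal{E}(\overline{I_h})}$: the $P$ layer produces $Y$'s, and the $\mathrm{CZ}$ clique contributes $Z^{k-h-1}$ on each such wire. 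The parity bookkeeping must close to $\pm X_{[n]}$ or $\pm Y_{[n]}$.

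We expect the main obstacle to be this parity bookkeeping. The argument has to use that $n$ is even, together with the parity of $h$ and of $k-h$, so that the spurious $Z$ strings produced by the $\mathrm{CX}$ and $\mathrm{CZ}$ cliques cancel or combine to the full string $Z_{[n]}$. The gray odd-$h$ gates on wires $1$ and $n$ in \cref{fig:sas_rules}, and the extra $P$ and $\mathrm{CZ}$ terms involving wires $1$ and $n$ in $\mathcal{C}^{H}_{low}$ and $\mathcal{C}^{H}_{high}$, exist precisely to repair the parity on the boundary qubits. We would therefore organize each remaining case as a table of the $Z$- and $X$-parity on the four wire classes $\{1\}$, $\mathcal{E}(I_h)$, $\mathcal{E}(\overline{I_h})$ and $\{n\}$, and check in each case that the resulting operator is uniform across all wires.
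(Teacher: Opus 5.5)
Your proposal follows essentially the paper's route: you conjugate the generators block by block through the $\mathrm{CX}$, IQP-like and diagonal layers of $\mathcal{C}^{H}_{mid}$ for even $h$ and defer the other strategies and parities to analogous bookkeeping, differing from the paper only in that you also work out $Z_{[n]}$ and handle the $\mathrm{CZ}$ clique by a per-wire $Z$-parity count instead of the paper's perfect-matching decomposition. The $X_{[n]}$ step you left open closes at once: the second block fixes $X_{\mathcal{E}(I_h)}$ exactly (no $Z$-corrections), and since $k=n-2$ and $h$ are both even (so the ``$k-h$ odd'' case never arises here), $k-h-1$ is odd and each wire of $\mathcal{E}(\overline{I_h})$ carries $YZ^{k-h-1}\propto X$, giving $X_{[n]}$ up to sign as in the paper (with $PXP^\dagger=Y$ a direct count shows the sign is in fact $+$).
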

\noindent
We show the proof for $X$-type stabilizer on $\mathcal{C}^{H}_{mid}$, while the proofs for the $Z$-type stabilizer and respective versions for $\mathcal{C}^{H}_{low}$, $\mathcal{C}^{H}_{high}$ can be done in complete analogy.

\begin{proof}
To prove the preservation of the stabilizer
$X_{[n]} = X_1\, X_{\mathcal{E}(I_h)}\, X_{\mathcal{E}(\overline{I_h})}\, X_n$, we have to perform the same procedure on this operator as before.
Conjugating $X_{[n]}$ by $\mathrm{CX}_{\mathcal{E}(\overline{I_h})\to \mathcal{E}(h)}$ we only have
to analyze its working on $X_{\mathcal{E}(I_h)}\,X_{\mathcal{E}(\overline{I_h})}$ since
$1,n \notin \mathcal{E}(I_h), \mathcal{E}(\overline{I_h})$. Here, we have
\begin{align}
&\mathrm{CX}_{\mathcal{E}(\overline{I_h})\to\mathcal{E}(I_h)}\,
X_{\mathcal{E}(I_h)}X_{\mathcal{E}(\overline{I_h})}\,
\mathrm{CX}_{\mathcal{E}(\overline{I_h})\to\mathcal{E}(I_h)}\notag
\\[0.3em]
&=\bigl(\mathrm{CX}_{\mathcal{E}(\overline{I_h})\to\mathcal{E}(I_h)}X_{\mathcal{E}(I_h)}\mathrm{CX}_{\mathcal{E}(\overline{I_h})\to\mathcal{E}(I_h)}\bigr)\notag
\\[0.3em]
&\cdot \bigl(\mathrm{CX}_{\mathcal{E}(\overline{I_h})\to\mathcal{E}(I_h)}X_{\mathcal{E}(\overline{I_h})}\mathrm{CX}_{\mathcal{E}(\overline{I_h})\to\mathcal{E}(I_h)}\bigr).
\end{align}
It is $\mathrm{CX}_{\mathcal{E}(\overline{I_h})\to\mathcal{E}(I_h)}\,X_{\mathcal{E}(I_h)}\,\mathrm{CX}_{\mathcal{E}(\overline{I_h})\to\mathcal{E}(I_h)}
= X_{\mathcal{E}(I_h)}$ because 
\\[0.3em]
$\mathrm{CX}_{i\to j}\,X_j\,\mathrm{CX}_{i\to j}=X_j$ for all $i \in \mathcal{E}(\overline{I_h})$ and $j \in   \mathcal{E}(I_h)$. Further, it holds that 
\begin{align}
&\mathrm{CX}_{\mathcal{E}(\overline{I_h})\to\mathcal{E}(I_h)}\,
X_{\mathcal{E}(\overline{I_h})}\,
\mathrm{CX}_{\mathcal{E}(\overline{I_h})\to\mathcal{E}(I_h)}\notag
\\[0.3em]
&= \prod_{i\in \mathcal{E}(\overline{I_h})} X_i\,X_{\mathcal{E}(I_h)}
= X_{\mathcal{E}(\overline{I_h})}\,
\underbrace{X_{\mathcal{E}(I_h)}^{\,|\mathcal{E}(\overline{I_h})|}}_{=\mathbb{I}}
= X_{\mathcal{E}(\overline{I_h})}
\end{align}
showing that the CNOT block leaves the X stabilizer invariant. 
Application of $H_{\mathcal{E}(I_h)}$ and than $P_{\mathcal{E}(I_h)}$ effects only
the $X_{\mathcal{E}(I_h)}$ gates in the stabilizers turning it into $Z_{\mathcal{E}(I_h)}$.
Further application of the $\mathrm{CZ}_{\{(i,j)\in \mathcal{E}(I_h)^2 \mid j>i\}}$
block on the conjugated $X$-stabilizers has no effect since
$1,n \notin \mathcal{E}(I_h)$ and $\mathcal{E}(I_h),\mathcal{E}(\overline{I_h})$
are disjoint sets.
Conjugation again with $H_{\mathcal{E}(I_h)}$ transforms $Z_{\mathcal{E}(I_h)}$
back to $X_{\mathcal{E}(I_h)}$ and $P_{\mathcal{E}(\overline{I_h})}$ turns
$X_{\mathcal{E}(\overline{I_h})}$ into $Y_{\mathcal{E}(\overline{I_h})}$, hence in
total we have $X_1\,X_{\mathcal{E}(I_h)}\,Y_{\mathcal{E}(\overline{I_h})}\,X_n$. The last conjugation with the $\mathrm{CZ}_{\{(i,j)\in \mathcal{E}(\overline{I_h})^2 \mid j> i\}}$ block acts only on $Y_{\mathcal{E}(\overline{I_h})}$. First, we rewrite the CZ-block in the following way:
\begin{align}\label{eq:PM}
\mathrm{CZ}_{\{(i,j)\in \mathcal{E}(\overline{I_h})^2 \mid j> i\}} 
=\prod_{\substack{M \in \mathcal{M}(\mathcal{E}(\overline{I_h}))\\ (i,j) \in M}} \mathrm{CZ}_{i,j}
\end{align}
Here, $\mathcal{M}(\mathcal{E}(\overline{I_h}))$ is the set of all perfect matchings for the graph $G(V,E)$ with $V = \mathcal{E}(\overline{I_h})$ and $E = \{(i,j) \mid i<j\}$. The graph is complete, hence the number of perfect matchings is odd. For an $M \in \mathcal{M}(\mathcal{E}(\overline{I_h}))$, it is 
\begin{align}
\mathrm{CZ}_M\, Y_{\mathcal{E}(\overline{I_h})}\, \mathrm{CZ}_M
= (-1)^{|M|} X_{\mathcal{E}(\overline{I_h})},\\
\mathrm{CZ}_M\, X_{\mathcal{E}(\overline{I_h})}\, \mathrm{CZ}_M
= (-1)^{|M|} Y_{\mathcal{E}(\overline{I_h})}. 
\end{align}
Furthermore, we use the fact that for two indices $i,j \in \mathcal{E}(\overline{I_h})$, it holds that $\mathrm{CZ}_{i,j}Y_iY_j\mathrm{CZ}_{i,j} = -X_iX_j$ and $\mathrm{CZ}_{i,j}X_iX_j\mathrm{CZ}_{i,j} = -Y_iY_j$. So applying the CZ-block from \cref{eq:PM} on $Y_{\mathcal{E}(I_h)}$ corresponds to a repeated application of $\mathrm{CZ}_{M}$ blocks with $M \in \mathcal{M}(\mathcal{E}(I_h))$ an odd number of times, which results in $(\pm 1)X_{\mathcal{E}(I_h)}$, hence one can conclude the preservation of the X-stabilizer. 
\end{proof}

\fi

\bibliographystyle{IEEEtran}
\bibliography{paper}

\end{document}